\newtheorem{theorem}{Theorem}
\newtheorem{example}[theorem]{Example}
\def\bbN{\mathbb{N}}
\def\bbH{\mathbb{H}}
\def\ra{\rightarrow}
\def\bbE{{\mathbb E}}
\def\bbN{{\mathbb N}}
\def\bbH{{\mathbb H}}
\def\bbQ{{\mathbb Q}}
\def\bbZ{{\mathbb Z}}
\def\bbX{{\mathbb X}}
\def\ra{\rightarrow}
\def\ext#1{#1}
\title{Generating Tree Structures for Hyperbolic Tessellations}
\author {
    Dorota Celińska-Kopczyńska, Eryk Kopczyński \\
    { \small Institute of Informatics, University of Warsaw}
}
\begin{document}


\maketitle

\begin{abstract}
We show an efficient algorithm for generating geodesic regular tree structures for periodic hyperbolic and
Euclidean tessellations, and experimentally verify its performance on tessellations.
\end{abstract}

Hyperbolic geometry is characterized by tree-like, exponential growth: in the hyperbolic plane $\bbH^2$,
a circle of radius $r$ has area and circumference exponential in $r$. This property has recently found
applications in data visualization \cite{lampingrao,munzner},
social network and data analysis \cite{papa,bogu_internet}.
However, numerical precision errors are a serious issue in computational hyperbolic geometry. 
Since the area and circumference of a circle of radius $r$ is exponential, any coordinate system
based on a fixed tuple of coordinates will fail to distinguish two points in distance $1$ already
in distance proportional to the number of bits used to represent the coordinates.
One way to handle this issue is to use tessellations. Hyperbolic tessellations can be
generated combinatorically, avoiding the numerical issues altogether, and have many applications
by themselves, e.g., in art \cite{coxe41}, game design \cite{hyperrogue}, data visualization
\cite{hrviz}. Of course, tessellations are also a beautiful area of pure mathematics.

While generating combinatorial structures for simple hyperbolic tessellations, such as the order-3 heptagonal
tessellation \cite{trigrid} is relatively straightforward, finding them for arbitrary tessellations
is a challenging problem. In this paper, we present an algorithm for generating geodesic regular
tree structures (GRTS) for a given tessellation description, which in turn can be used to efficiently
generate the graph.
The core of our algorithm is conceptually similar to Angluin's algorithm of learning regular languages \cite{angluin}.
The core algorithm runs in polynomial time (under specific assumptions) when given the correct graph
of the tessellation; however, giving the correct structure is challenging when we do not know the
tree structure yet, and we can only provide an approximation. We present methods of constructing
an efficient approximation, that is, one that causes our whole algorithm to run efficiently.
Based on the experimental evaluation, we conjecture that our algorithm works in time polynomial
in the number of tile types for a fixed bound on the degrees of tiles and vertices.

{\bf Structure of the paper.} In Section \ref{sec:periodic} we define the periodic tessellations we
are working with. In Section \ref{sec:trees} we describe the tree structure, and explain how to
use such a tree structure to generate a periodic tessellation -- such tree structures are used both
in our algorithm and in earlier methods. In Section \ref{sec:previous} we discuss the limitations
of earlier methods. Our algorithm is described in Section \ref{sec:generating}. In \ref{sec:app}
we provide some applications, and in \ref{sec:experimental} we provide the experimental results
of our algorithm.

\section{Periodic tessellations}\label{sec:periodic}
We will work with periodic tessellations of $\bbX$, which can be either the Euclidean plane $\bbE^2$
or the hyperbolic plane $\bbH^2$. While our main focus is $\bbH^2$ where tree structures are
essential, $\bbE^2$ is useful for constructing counterexamples. While concrete periodic tessellations
in $\bbX$ are important for visualization and presentation purposes, we can albo consider abstract 
descriptions which describe their combinatorial structure as periodic planar graphs.

A periodic tessellation $C$ has a finite set of polygonal tile types (orbits) $T$. 
Each tile type $t\in T$ has a shape $S_t$, which is a polygon with $s_t$-fold rotational symmetry
and $N_t=n_t \cdot s_t$ edges, indexed clockwise from $0$ to $n_ts_t-1$. For convenience, we will consider
this ordering cyclic, so e.g. the edge $N_t+3$ is the same as edge 3. 
Every tile $c$ in the tessellation
$C$ is assigned $t(c) \in T$ and a clockwise enumeration of its edges. There is an isometry 
between the tile $c$ and $S_t$, which takes $i$-th edge of $c$ to $i$-th edge of $S_t$.
Furthermore, if $t(c_1) = t(c_2)$ and $r \in \bbZ$, there is an orientation-preserving isometry
$f$ of $\bbX$ which maps $c_1$ to $c_2$, $i$-th edge of $c_1$ to $(i+rn_t)$-th edge of $c_2$,
every other tile $c \in C$ to some tile $f(c) \in C$ such that $t(c) = t(f(c))$.

Because of periodicity, for every $i \in \bbZ$ and $t \in T$ there exists $i' \in \bbZ$ and
$t' \in T$ such that edge $i+rn_t$ (for all $r\in\bbZ$) of $t$ always connects to edge
$i'+r'n_{t'}$ of $t'$ for some $r\ \in \bbZ$. We say that $(t,i)$ connects to $(t',i')$,
$(t',i')=c(t,i)$. The shapes $S_t$ and the connection rules for every $t\in T$ determine the periodic tessellation
$C$ uniquely (up to isometry).

The $i$-th vertex of a tile $c \in C$ is the one incident to edges $i-1$ and $i$. The \emph{valence}
of a vertex is the number of tiles in $C$ which are adjacent to it. Because of periodicity,
the valence $v(t,i)$ is uniquely determined by $t(c)$ and $i$.

\begin{figure}[ht]
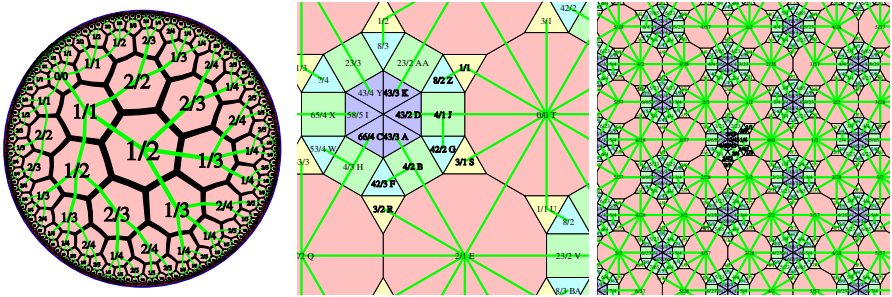

\begin{center}
\includegraphics[width=.32\linewidth]{pic/73.pdf}
\includegraphics[width=.32\linewidth]{pic/leftright.pdf}
\includegraphics[width=.32\linewidth]{pic/leftright-zoomout.pdf}
\end{center}
\caption{From left to right: (a) order 3 heptagonal tiling, (bc) A complex Euclidean tessellation, zoomed in and out.
The first number is the state in the tree structure, and the second number is the distance from the root tile.
The letters are labels used in the description of Example \ref{complextes}.
\label{examplefig}}
\end{figure}

An {\bf abstract tessellation description} (ATD) is $D = (T, n, s, v, c)$, where $T$ is a finite set, $n:T\ra\bbN$
and $s:T\ra\bbN$ determine the sizes and symmetries of tiles, $v: T' \ra \bbN$ determines the valences
of vertices, and $c : T' \ra T'$ determine the connection rules. By $T'$ here we denote the set of edge types
$\{t,i\}: 0 \leq i < n_t\}$. An abstract tessellation description is {\bf consistent} iff $c(c(t'))=t''$ and
for every $t' \in T'$, we have $v(t')=v(n(t'))$ and $n^{v(t')}(t')=t'$, where $n$ is the next edge around the
vertex, i.e., if $c(t,i)=(t',i')$, $n(t,i)=(t',i'+1 \bmod n_{t'})$.

A consistent ATD uniquely determines
the structure of the tessellation $C$. Furthermore, most abstract combinatorial structures
can be realized as concrete periodic tessellations of $\bbH^2$. Indeed, such a $C$ can be realized as a topological surface
$M$ by gluing $N_{t(c)}$-gons for all tiles c of $C$ according to the connection rules. By identifying the points of $M$
according to its symmetries, we obtain a quotient orbifold. We can compute the Euler characteristics $\chi\in\bbQ$ of this orbifold
based on $D$. Except for a few \emph{bad} orbifolds, such an orbifold can be realized as a quotient (by some discrete group of isometries)
of a hyperbolic tessellation if $\chi<0$, Euclidean tessellation if $\chi=0$, spherical tessellation if $\chi>0$ (Theorem 2.4, \cite{scottgeoms}).

\section{Tree structure}\label{sec:trees}
In this section we describe the main tool we will be using to generate tessellations.

A {\bf regular tree structure} (RTS) is $(Q, t:Q \ra T, e: Q \times \bbZ \ra Q \cup \{L, R, P\})$.
The tree structure generates a tessellation $C$ as follows. 
Every tile $c \in C$ is assigned a state $q(c)$ such that $t(q(c)) = t(c)$. 
The transition rule $e(q,i)$ denotes every edge of $c$ as a parent ($P$), left ($L$),
right ($R$), or child ($q \in Q$) connection. For convenience we index transition
rules with all integers $i$, but we have $c(q,i) = c(q,i+N_{t(q)})$, so
all information is finite. Every tile has at most one parent connection, and
if $c(q,i)=q'$, $(t(q),i)$ must connect to $(t(q'),p(q'))$, where $p(q')$ is the index
of the parent connection in $q'$.

\begin{example} The simplest tessellation is $\{7,3\}$, the order-3 heptagonal tessellation (Figure \ref{examplefig}a).
We have only one tile type $t$ which is a heptagon, with $s_t=7$ and $n_t=1$.
For every $i, i'$, $(t,i)$ connects to $(t,i')$. This tessellation has tree states
$Q=\{0,1,2\}$. The connection rules for each $q \in Q$ and indices $i=0..6$
are: $e(0)=(1,1,1,1,1,1,1)$, $e(1) = (P,L,2,1,1,R,R)$, $e(2) = (P,L,L,2,1,R,R)$.
\end{example}

A RTS lets us generate the tessellation $C$
combinatorially. The tessellation will be generated lazily;
let $G$ be the set of tiles generated so far.
For every $g\in G$, we keep the following information: 
its state $q(c)$, and for every $i=0 \ldots N_t-1$, its connection
$e(g,i$), which is either a pointer to
another tile $g'\in G$ and an index $i'$ (meaning that edge $i$ of $g$
connects to edge $i'$ of $g'$) or NULL (meaning that we do not know
this connection yet).

A \emph{treewalker} is a structure consisting of
a pointer to a cell $g \in G$, and edge index $i$. Intuitively,
a treewalker is currently in $g$ and facing the direction given by $i$.
A walker can step forward (compute $e(g,i)$ if it is currently NULL, 
and then move to $e(g,i)$), rotate counterclockwise to $(c,i-1)$ or
clockwise to $(c,i+1)$.

We start with $G=\{g_0\}$, where $q(g_0) = q_0 \in Q$ such that
$q_0$ has no parent. We will generate the other tiles as needed.
Suppose we want to check what tile is adjacent to an already generated
tile $g$, at edge $i$. If $e(g,i)$ is known, nothing needs to be
done. Otherwise, let $x = e(q(g),i)$.

(1) If $x\in Q$, we generate a new tile of
type $t(x)$ and connect to edge $p(x)$ of the new tile. (Connecting
means setting $e$ on both sides.) This way, we
generate a tree of tiles.

(2) If $x = R$, we construct the connection by going counterclockwise
around the $i+1$-th vertex of $c$. Take the treewalker $(g,i+1)$,
and $v(t(q(g(i))),i)-1$ times step forward and rotate clockwise.
(Stepping forward may require calling our algorithm recursively if
the given edge is not yet generated). After $v(t(q(g(i))),i)-1$
iterations our treewalker is in state $(g', i')$, where 
$e(q(g'),i') = L$. We connect $(g,i)$ to $(g',i')$.

(3) The situation $x = L$ is symmetrical.

(4) By construction, the situation $x = P$ may not happen: we have
started in $q_0$ which has no parent connection, and for every
tile $g$ generated later, its parent connection starts
connected to a previously generated vertex.

\begin{example}\label{complextes} The tessellation in Figure \ref{examplefig}(bc) is a more complex tessellation
with 5 tile types. The algorithm described in the sequel generates a
RTS with $|Q|=69$.

Rule (1) lets us connect all the tree edges (colored green). 
For clarity, assume that the parent of every tile is indexed by 0.
Suppose we now want to find neighbor 1 of tile $A$ in state 43.
The transition rule $e(43,1)=L$, so we should go clockwise around
the vertex $ABFHC$. The first move ($AB$) is a parent edge, and the second move 
($BF$) is a child edge, $e(4,1)=42$ (rule 1). For the next move $FH$ we see
$e(42,2)=L$, so the connection algorithm will have to be called recursively.
(Note that $42$ and $43$ are different states because $e(43,2)=R$.)
Then, $e(4,2)=66$ so we have a child edge to $C$. We connect the 2nd edge of $C$
to the 1st edge of $A$.

In the recursive call, we go around the vertex $FRQH$, in turn recursively
going around the vertices $REQ$ and $FBER$.
\end{example}

If the tree structure is correct, the tree obtained by connecting
every tile according to rules (1) and (4) will generate a tree,
and every tile $c \in C$ of the actual tessellation will appear
exactly once in $G$. Furthermore, if the assignment of $L$ and $R$
is correct, rules (2) and (3) will let us combinatorially determine
all the connections, and the resulting $G$ is essentially a copy
of $C$, constructed combinatorially.

We also want our RTS to be \emph{geodesic} (GRTS),
i.e., if tile $g'$ is a child of $g$, we have $\delta(g') = \delta(g)+1$,
where $\delta(g)$ is the length of the shortest path from $g_0$ to $g$.
That is, the depth of a tile $g$ equals its distance from the root tile $g_0$.

\section{Previous methods}\label{sec:previous}
A na\"ive method of generating hyperbolic tessellations is to compute the coordinates of every
tile. If we get the same coordinates as a previously existing tile, we know that we connect to it.
More precisely, we represent every point in $\bbH^2$ using three coordinates in the Minkowski hyperboloid model \cite{cannon},
and its isometries as $3 \times 3$ matrices. For every tile $g \in G$ we compute the matrix of the isometry which takes
the shape $S_{t(g)}$ to $g$. When creating a new connection to a tile of type $t$ at the location given by isometry $T$, we see if
there is already a tile $g'$ of that type at locating given by isometry $T \cdot R_{2\pi i/n_t}$ for $i \in \bbZ$, where $R_\alpha$
is the matrix of rotation by angle $\alpha$; if yes, we know that we connect to $g'$, if no, we create a new $g'$ there.

This method is not suitable because of the precision issues inherent to computations in hyperbolic geometry.
To explain the issue,
let $d(A,v,r)$ be the point $r$ units from the given point $A$ in direction $v$. Suppose that, due to the numerical precision issues, we represent the direction
$v$ as $v'$, where $|v-v'| \leq \epsilon$. However, a circle of radius $r$ has circumference of $2\pi \sinh(r)$, which is exponential in $r$,
and thus the distance between $d(A,v,r)$ and $d(A,v',r)$ can be of order $\epsilon \times e^r$ \cite{hamilton2021nogo}.
In general, the number of tiles in $r$ steps from the center is exponential in $r$, so any representation using a fixed number of bits will not be able to
discern between the coordinates of two tiles if $r$ is large enough, on the order of the number of bits.
For points faraway from the center of the model, we also get representation issues when points are too close to the boundary in the Poincar\'e model,
or too large in the Minkowski hyperboloid model \cite{achilles}.
Thus, while
this method works without significant problems in Euclidean geometry, and also can be used when we are only interested in generating small
neighborhoods of some point in $\bbH^2$, it behaves very badly when we want to generate tiles of a hyperbolic tessellation in a large radius.
Thus, better methods of working with hyperbolic tessellations involve representing them in a purely combinatorial way.

One such combinatorial representation is based on the theory of automatic groups \cite{wpigroups,levygen}. Consider a tessellation of the hyperbolic plane
with a \emph{Coxeter triangle} with angles $180^\circ/p$, $180^\circ/q$ and $90^\circ$. Let $a$, $b$, and $c$, respectively, be the reflections of this triangle
in its edges opposite of the angles. By considering all the possible sequences of reflections, we obtain a tessellation of the hyperbolic plane; this is the
same tessellation one can obtain by subdividing every tile of the $\{p,q\}$ tessellation into $2p$ Coxeter triangles. We can represent every tile as a sequence
of reflections $a$, $b$, $c$ which take some origin Coxeter triangle $t_0$ to it; thus, the set of all triangles is a group $G$ generated by $a$, $b$ and $c$, where $a^2=b^2=c^2=(ab)^2=(ac)^q=(bc)^p=e$. 

Let $\phi: \{a,b,c\}^* \ra G$ be the group homomorphism which maps a sequence of symbols $a$, $b$ and $c$ to the group element it represents.
A word $w \in \{a,b,c\}^*$ is a geodesic if there is no $w_2 \in \{a,b,c\}^*$ such that $\phi(w_2) = \phi(w)$ and $|w_2| < |w|$.
The group $G$ is hyperbolic and thus it is
strongly geodesically automatic \cite{wpigroups}, which means that the set of all geodesic words is regular, there is a transducer $T_\epsilon$ which take two words $v$, $w$ and accept them 
iff $\phi(v)=\phi(w)$, and for every $x \in a, b, c$ there exists a transducer $T_x$ which takes two words $v$, $w$ and accepts them iff such that $\phi(v)x=\phi(w)$.

Let $\preccurlyeq$ be an order on words in $\{a,b,c\}^*$ that can be recognized by a transducer and such that $u\preccurlyeq v$ implies $uw \preccurlyeq vw$. 
(A transducer here is a deterministic finite automaton over $\{a,b,c,\$\}^2$ which reads two words in parallel; $\$$ is used to pad the word which ends earlier.)
For every element $g$,
let $s(g)$ be the word in $\{a,b,c\}^*$ such that $\phi(s(g))=g$ and $s(g)$ is the smallest according to the order $\preccurlyeq$. By the closure properties of regular languages,
$s(G)$ is also a regular language. We can find a deterministic finite automaton (DFA) recognizing
this regular language, which is essentially a GRTS that can be used to
generate the tessellation. The Knuth-Bendix completion algorithm \cite{useofknuth} can be used to easily find this DFA based on our relators $a^2=b^2=c^2=(ab)^2=(ac)^q=(bc)^p=e$.

This method works for for tessellations by Coxeter triangles and also their higher-dimensional analogs. Since these tessellations are strongly related to regular
tessellations, we can also build regular tessellations on top of them (possibly not geodesic). However, the process is quite complex, and it is not clear how to
generalize the Coxeter triangle approach to more complex tessellations.

The theory of automatic groups can be adapted to any periodic hyperbolic tessellation, by considering symbols corresponding to all possible moves through the edges.
We will again describe the possible paths in our tessellations as sequences of symbols. Imagine a walker at some tile of type $t$ facing edge $j$ such that
$n_t | j$. A symbol $(t,i)$ for every $i \in \{0, \ldots, N_t-1\}$ rotates us by $i$ edges to face edge $(j+i) \bmod N_t$, steps forward 
to tile of type $t'$ and edge number $i'$, and rotate back by $i' \bmod n_{t'}$ so that the obtained facing $j'$ again satisfies $n_{t'} | j'$. 
The set of all sequences of symbols describing a valid path is a hyperbolic groupoid, and most of the theory above is still sound for hyperbolic groupoids, in particular,
the set of shortlex smallest representations is again a regular language. However, outside of the simplest cases such as Coxeter triangles, the Knuth-Bendix completion
algorithm is not likely to terminate.

In \cite{trigrid} a method is presented for generating the tessellations obtained using the Goldberg-Coxeter construction on a regular tessellation $\{p,3\}$.
This method uses a GRTS, and is similar to the methods used in \cite{hyperrogue,margenstern_heptagrid,margenstern_pentagrid}.
It relies on the fact that, in such tessellations, the set of tiles in $d$ steps from the chosen root tile $g_0$ forms a cycle. A similar approach also works
for the regular tessellation $\{p,4\}$ (and tessellations obtained from it using the Goldberg-Coxeter construction). However, this assumption is not satisfied
even in very simple cases, for example, the face-transitive tilings with face configuration V5.8.8 or V14.14.3 \cite{trigrid}.

In \cite{colorbook} GRTS are used to determine the coordination sequences for Euclidean tessellations. This is more of an informal method
rather than an algorithm working on all tessellations.

\section{Generating a tree structure}\label{sec:generating}

In this section we describe our algorithm for generating a tree structure.
The input is an ATD $D$. The output will be a GRTS generating a tessellation consistent with $D$.

In the first two subsections we present the general idea of our algorithm. In these subsections,
we assume that we have access to our tessellation, i.e.,
we have a set $G$ which correctly represents the set of all tiles, and for each tile $g \in G$
we know its type $t(g) \in T$, connections $e(g,i) \in G \times \bbN$, and $\delta(g)$,
the distance from the root.

In practice, we do not have that information -- while it can be easily obtained if we already
have a GRTS, during the run of our RTS-constructing algorithm we can use only an approximation.
This approximation is detailed in the later subsections.

\begin{figure}[ht]
\begin{center}
\includegraphics[width=.6\linewidth]{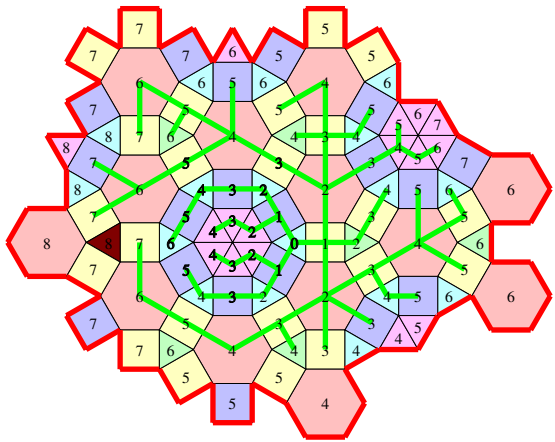}
\end{center}
\caption{This tessellation has threefold symmetry at the dark tile. Surprisingly, every
direction gets us closer to the origin. Snapshot at the time of checking the parent direction for the dark tile.\label{symtriangle}}
\end{figure}

\subsection{Determining the parent and sides}\label{sec:parent}

In a GRTS, every $g \neq g_0$ will have a parent
$p(g)$. We know $\delta(p(g)) = \delta(g)-1$. If there is only one neighbor with this
property, we can assume that it will be the parent (as long as the assumed values of
$\delta$ are correct); otherwise, we need to use some tiebreaker rule to pick between the possible
candidates. Once we pick the parent for every tile, we can consider two tiles equivalent if
their subtrees have the same shape (Myhill-Nerode equivalence). The minimal GRTS producing that tree
will have a state for every equivalence class. Picking a correct tiebreaker rule is crucial -- an
incorrect rule potentially corresponds to a tree with infinitely many differently shaped subtrees. 
As an example, consider the Euclidean square
grid with tiles indexed by their pair of coordinates ($\bbZ^2$), and the tiebreaker rule which chooses that the parent of $(x,y)$ ($x,y>0$) is $(x-1,y)$
if $x<y$, and $(x,y-1)$ if $x\geq y$. With such a tiebreaker rule, $(1,y)$ would have a different
finite number of descendants for every $y>2$, and thus they all would have to be in different
states.

Let $I$ be the set of parent candidates. We use the following tiebreakers for a tile of type $t$:
\begin{itemize}
\item for every $i \in I$, compute $i \bmod n_t$ -- pick the smallest one.
\item the rule above may not yield a winner if parent candidates surround $g$
(Figure \ref{symtriangle}).
In that rare case, construct the shortest paths generated by every candidate parent as
sequences of integers (by how many edges should the treewalker rotate at every point in the
path), and pick the lexicographically earliest one.
\end{itemize}

\begin{theorem}
For hyperbolic tessellations, the rules above generate a GRTS (with finitely many states).
\end{theorem}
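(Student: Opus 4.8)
The plan is to reduce the statement to two claims: (i) the parent-pointers produced by the tiebreakers form a spanning tree of the tessellation $C$ rooted at $g_0$ along which $\delta$ decreases by exactly one, and (ii) the resulting tree has only finitely many subtree shapes up to isomorphism (as trees decorated by tile types). Claim (i) gives both that we obtain a genuine tree structure and that it is geodesic; moreover the $L/R/$child labelling of every edge is then forced by the parent function together with the planar (cyclic) edge order (an edge of $g$ is a child edge iff the tile across it has $g$ as its parent, and each remaining edge is $L$ or $R$ according to whether it precedes or follows the child edges in clockwise order from the parent). Claim (ii) gives finiteness of the state set, since the minimal GRTS reading off this tree has one state per Myhill--Nerode class of subtrees. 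Correctness (that the generation procedure reconstructs $C$) is automatic here, because the tree and all labels are read off the genuine tessellation.

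Claim (i) is the easy part. Every tile $g \neq g_0$ lies on some shortest path to $g_0$, so it has at least one neighbour $h$ with $\delta(h)=\delta(g)-1$; hence the candidate set $I$ is nonempty, and the two tiebreakers (smallest $i \bmod n_t$, then lexicographically least rotation-sequence of a shortest path) single out exactly one element of $I$. Thus the parent $p$ is a well-defined function on $C\setminus\{g_0\}$ with $\delta(p(g))=\delta(g)-1$, so iterating $p$ strictly decreases $\delta$ and terminates at $g_0$; the parent-pointers therefore form a spanning tree, geodesic by construction.

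The heart of the proof is claim (ii), and this is where hyperbolicity enters. Since $C$ tiles $\bbH^2$ with tiles of bounded size, its adjacency graph is quasi-isometric to $\bbH^2$ and hence Gromov-hyperbolic; consequently the associated move groupoid (sequences of edge-crossings from $g_0$, as in Section \ref{sec:previous}) has only finitely many cone types in the sense of Cannon. The intended argument is that the subtree shape of a tile $g$ with $\delta(g)$ large is a function of only finitely many ``types'' of local data: the combinatorial isomorphism type of the ball $B(g,R)$ for a suitable constant $R$ (finitely many by periodicity of the ATD), the pattern of values $\delta(h)-\delta(g)$ for $h\in B(g,R)$ (finitely many, being governed by the cone type of $g$), and the resolution of the tiebreakers inside that ball. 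One then checks the tiebreakers are determined by this data: the first tiebreaker depends only on which edges of $g$ lead to distance-decreasing neighbours, i.e.\ on the cone type; and the second tiebreaker is invoked only when distance-decreasing neighbours completely surround $g$ (Figure \ref{symtriangle}), a configuration forcing $g$ to lie within a bounded ball around $g_0$. Indeed, such a $g$ is a strict local maximum of the $1$-Lipschitz function $\delta$, and the short loops this creates around radius $\delta(g)$ cannot occur far from the basepoint in a hyperbolic tessellation (they would violate the linear isoperimetric inequality). Hence the second tiebreaker is consulted on only finitely many tiles, contributing finitely many extra subtree shapes.

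The main obstacle is making ``the tiebreaker resolution is a function of finitely much local data'' fully rigorous. Two points need care. First, one must quantify ``local maxima of $\delta$ live near $g_0$'' explicitly in terms of the hyperbolicity constant and the maximal tile and vertex degrees, so that the exceptional region is genuinely bounded. Second, when the first tiebreaker ties between two candidates without $g$ being fully surrounded, one has to compare two shortest paths from $g_0$ to $g$ using bounded information: here the location where the two paths first diverge is controlled, because bigons in a hyperbolic space are uniformly thin, so either they diverge within a bounded window of $g$ (handled by the cone type) or their prefixes already differ inside the bounded region around $g_0$ (the same finite region for every far tile). Finally, it is worth remarking that hyperbolicity is genuinely needed: the $\bbZ^2$ example in Section \ref{sec:parent} shows a Euclidean tessellation for which an analogous tiebreaker produces infinitely many distinct subtree shapes, which is precisely why the theorem is stated only for hyperbolic tessellations.
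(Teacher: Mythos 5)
Your route is genuinely different from the paper's. The paper proves this theorem in one sentence: it encodes tiles as words in the path groupoid of the tessellation, observes that the parent rule selects the $\preccurlyeq$-least geodesic representative for a suitable transducer-recognizable order compatible with concatenation, and invokes the theory of automatic groups (Section \ref{sec:previous}): hyperbolic groups and groupoids are strongly geodesically automatic, so the language of selected representatives is regular, and the DFA of that language \emph{is} the GRTS. You instead try to re-derive this from scratch via cone types and thin bigons. That is the right underlying machinery, but it makes you responsible for exactly the steps the citation would have supplied, and two of those steps are missing or wrong in your sketch.

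First, the claim that the second tiebreaker is consulted only in a bounded ball around $g_0$ does not hold as argued. The first tiebreaker fails to produce a unique winner whenever two parent candidates sit at edges $i$ and $i+rn_t$ related by the $s_t$-fold symmetry of the tile; this does not require the candidates to surround $g$, and nothing in hyperbolicity forbids it far from $g_0$ --- the two geodesics from $g_0$ arriving at symmetric edges of $g$ only need to fellow-travel, which they can do while ending on different edges of a single bounded-diameter tile. Moreover, the isoperimetric argument for the surrounded case is not valid: the loop through the neighbours of a surrounded tile bounds only that one tile, so there is no large area to contradict the linear isoperimetric inequality (and Figure \ref{symtriangle} shows such a tile occurring away from the root). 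So you cannot treat the second tiebreaker as a finite exceptional set; you must show that the outcome of the lexicographic comparison of the two turn-sequences is itself a finite-state function of $g$. Thin bigons do give the needed statement --- the two competing paths from $g_0$ agree except in bounded windows near $g_0$ and near $g$ --- but this fellow-traveller property for the \emph{selected} representatives is precisely what has to be proved, and it is the second place where your induction has a hole: to get finitely many subtree shapes you need a ``type'' of $g$ that is both finite-valued and hereditary, i.e., the type of each child is determined by the type of the parent and the edge index. Finiteness of cone types gives finiteness of the distance data, but heredity of the tiebreaker outcome again requires controlling how the chosen path to a child extends the chosen path to the parent; asserting that the subtree shape ``is a function of only finitely many types of local data'' is the content of the theorem, not a step toward it. (A smaller inaccuracy: the $L/R$ labels are not locally forced by the cyclic position relative to the child edges --- a non-tree edge can lie between two child edges --- which is why the paper determines sides by walking along walls in {\sc GetSide}.)
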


This follows from the theory of automatic groups (see Section \ref{sec:previous}) for an appropriate
order $\preccurlyeq$. Since our main focus is on hyperbolic tessellations, we have not proven this
in the Euclidean case, although our extensive testing on Euclidean periodic tessellations suggests this is also
true.

We also need to classify non-tree edges into \emph{left} edges and \emph{right} edges, corresponding to 
$L$ and $R$ transitions of the tree structure. Non-tree edges form a forest, where every
connected component is infinite (if it was finite, we could go around the wall, and so the tree structure
given by parent directions would not be a tree) but it does not split $\bbX$ into two regions (it has only
one infinite branch). We call these connected components \emph{walls} (imagine a wall placed on every non-tree
edge). Remember that a treewalker represents being at a given tile and facing the given direction.
Start at $s$ and face edge $i$.  Try moving around the wall. If we eventually reach $(t,j)$ by moving right,
we know that $t$ is to the right; if we eventually reach $(s,i)$ by moving left, we know that $s$ is to the left.

We use the procedure {\sc GetSide}$(t,j)$ to determine the side of a non-tree edge $(t,j)$. This procedure
works by creating two walkers at $(t,j)$, one moving left and one moving right. The one which is closer to 
the root moves first, which guarantees that eventually one of them will reach $(s,i)$ and thus determine the side.
This procedure can be greatly optimized by caching. We cache the result for the next call of ${\sc GetSide}$;
also note that this will also determine the side for every wall edge touched on the way, so we cache these too. Also, 
if the left (right)-moving treewalker is on a known left (right) edge, it can immediately cross the wall.
This way, a call of {\sc GetSide} which takes $m$ moves generates $\Theta(m)$ new information.

\subsection{Determining the states and transition rules}\label{sec:dettrans}

Our algorithm of determining the GRTS is conceptually similar to Angluin's classical algorithm of learning regular
languages \cite{angluin}. We keep the list of tiles $S$ that represent distinct states, and a set of rules $E:G \ra Q$
which determine the state a given tile belongs to. We are looking for a pair of $(S,E)$ which is closed 
(for every $g\in S$, and every child $g'$ of $g$, $E(g') = E(g_1)$ for some $g_1 \in S$) and consistent
(if $E(g_1)=E(g_2)$ and $g_1'$ and $g_2'$ are matching children of $g_1$ and $g_2$ respectively then $E(g'_1) = E(g'_2)$).
In each iteration, if the pair $(S,E)$ is not closed, we add every such $g'$ to $S$. If it is not consistent,
we extend $E$ in such a way that $E(g_1) \neq E(g_2)$. Eventually, we obtain a pair $(S,E)$ which is both
closed and consistent. In this situation, we can use the information obtained to build a regular tree structure,
and check whether this regular tree structure correctly builds the tessellation $C$. If yes, we return this
structure. Otherwise, we add any find inconsistencies we find to the set $S$.

Initially $S$ contains only the root $g_0$, and the rules $(E)$ are initially as follows. Initially, to classify $g \in G$, take the walker $w=(g,p(g))$. 
For every $i \in N_{t_g}$, we rotate the walker $w$ by $i$ steps, to $w^i=(g,i')$. Let $v^i=(g',i'')$ be the
walker obtained by stepping $w_i$ forward. We classify the relationship of $g$ and $g'$ as as one of the 8 cases:
parent, child, or 6 possibilities of non-tree edges, corresponding to $\delta(g)-\delta(g') \in \{-1,0,1\}$
and whether $t$ is to the right or to the left of $s$, according to {\sc GetSide}$(w)$. Initially, $g_1$ and $g_2$
are classified as the same state iff $t(g_1)=t(g_2)$, $p(g_1) = p(g_2) \bmod n_{t(g)}$, and their neighbor classifications are equal.

Inconsistency happens when we have $E(g_1)=E(g_2)$ but for some $i$, $v_1^i$ and $v^2_i$ are children of $g_1$
and $g_2$, but $E(v^1_i) \neq E(v^2_i)$. To handle such inconsistencies, we allow the rules to similarly classify
edges of the descendants of $g$. We use an approach based on decision trees. The rules are based on a series
of queries of form ``consider the edge we asked about in the earlier $i$-th query and that turned out to be a child
edge, let $w=(g',p(g'))$ be this child; what is the classification of $w$ rotated by $j$ steps?'' (the 0-th
query is $g$ itself, and the first $N_{t(g)}$ queries are about the neighbors of $g$). Each node of the decision tree 
is characterized by $i$, $j$, and has branches corresponding to the 8 possible answers. In the case of inconsistency,
we find the point in the decision trees for $v^1_i$ and $v^2_i$ which caused $E$ to characterize them as different
states, and we extend the leaf corresponding $E(g_1)$ to ask the same questions.

Once we obtain a pair $(S,E)$ which is closed and consistent, generating a candidate GRTS is straightforward.
Now, we validate whether the obtained tree structure is correct. To this end, for every state $q \in Q$
obtained, we determine whether it is a \emph{dead branch} or a \emph{live branch} -- a branch is dead if
it has finitely many descendants, and live otherwise. 
We determine dead branches by applying the criterion ``a branch is dead if all its children are dead''
until all dead branches are known.

The tree structure is correct if it correctly generates the structure of all the walls (i.e., non-tree edges).
For every state $q$, we find out all pairs of its live children $(i_1, i_2)$ such that there is no other
live child nor a parent between $i_1$ and $i_2$. Every pair of this kind corresponds to a wall --
the tile of state $q$ is the tile at the bottom of the wall (i.e., one adjacent to the tile adjacent to
the wall with the smallest $\delta$). We need to check whether this wall correctly splits the subtree
starting at a tile of state $q$ into the left and right side.

The function to validate the wall is called {\sc ExamineBranch}.
Again, we use two treewalkers touching the wall, both starting at $(t,i_1)$, where $q(t)=q$, one always moving to
the left and one always moving to the right. If the two treewalkers touch the two sides of the same
edge, and the transition rule labels the non-tree edge as "right" for the left treewalker and as "left" 
for the right treewalker, we advance both of them. If the right treewalker touches a non-tree edge
labelled as "right", we push that edge to a stack, and move forward; when it touches one labelled
as "left", we first pop edges from the stack before consulting the other treewalker. We also use a
similar stack for the left treewalker (by symmetry).

We also need to know when to stop. When both treewalkers are advanced, we note the current state of
both, and if they are currently in dead branches, also all the dead ancestors. If we see the
same sequence of states twice (possibly during an examination of another triple $(q,i_1,i_2)$), 
we know that the sequence will repeat, and thus the wall correctly splits the subtree rooted at $q$.
Whenever one of the two treewalkers moves from a tile $g$ to its child $g'$, we check whether the
code of $g'$ is what was expected from the transition rules for $q(g)$. If no, we add $g$ to the list of
important tiles $L$, and we need to repeat the main loop of our algorithm.

\begin{theorem}\label{divine}
If $G=C$ and $\delta$ are correct, this algorithm will generate a GRTS for $C$ in time polynomial in the 
number of different subtree shapes (according to the chosen parent rule), degrees of tiles and vertices, and the size
of dead branches.
\end{theorem}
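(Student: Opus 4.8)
\textbf{Proof plan.} I would prove the statement in two halves — first that the algorithm halts and returns a valid GRTS, then that it does so in polynomial time — using throughout the fact guaranteed by the first theorem of Section~\ref{sec:parent} that the parent/side rules determine a genuine GRTS with finitely many states, whose number $N$ equals the number of distinct subtree shapes. Write $d$ for the maximal tile/vertex degree and $B$ for the maximal size of a dead branch.

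For correctness, the key is a standard Angluin-style invariant: the classification $E$ never separates two tiles in the same Myhill--Nerode class, and distinct representatives in $S$ always receive distinct codes, so at every moment the current hypothesis has at most $N$ states. Consequently each of the three ``progress'' events --- adding a tile because $(S,E)$ is not closed, refining a decision tree because $(S,E)$ is not consistent, and adding an important tile because {\sc ExamineBranch} reported an unexpected child code --- strictly increases the number of realized states (for the third one because the mislabeled descendant, together with the decision-tree path leading to it, is a context distinguishing the important tile from its current representative). Hence $O(N)$ progress events occur in total and the main loop terminates. When it stops, $(S,E)$ is closed and consistent, so the induced transducer reproduces the parent/child tree exactly, and {\sc ExamineBranch} has certified every wall; here one must check that the repetition test in {\sc ExamineBranch} is sound --- the recorded datum (the pair of treewalker states together with the dead-ancestor stacks) ranges over a finite set and the two-walker dynamics between advances is deterministic, so a repeated datum forces eventual periodicity, which means the $L/R$ labels split the subtree rooted at $q$ correctly for all of $C$. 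These facts together yield a correct GRTS.

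For the running time, the number of outer iterations is $O(N)$ by the paragraph above. Within one iteration: testing closure and consistency touches the $\le N$ representatives, their $\le d$ neighbours each, and follows decision-tree paths of depth $\le N$, generating $O(N^2 d)$ tiles; computing dead and live branches is a fixpoint over $\le N$ states but explores each dead subtree, costing $O(NB)$; and {\sc ExamineBranch} is run for $\le N$ states and $\le d$ walls per state, each run terminating after $O(M)$ advances, where $M = \mathrm{poly}(N,B,d)$ bounds the number of possible recorded pair-states (a live state contributes $O(N)$, a dead branch $O(B)$, the stacks a bounded amount more), with $O(\text{stack depth})$ extra moves between advances. The single genuinely amortized resource is {\sc GetSide}: since an $m$-move call produces $\Theta(m)$ permanently cached side facts, the total {\sc GetSide} cost over the whole run is linear in the number of wall edges ever touched, which is bounded by the total number of tiles ever generated --- itself $\mathrm{poly}(N,d,B)$ by the counts above, using the geodesic property and bounded degree to keep each local exploration within a poly-bounded radius. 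Multiplying per-iteration costs by $O(N)$ gives a polynomial in $N$, $d$ and $B$.

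I expect the main obstacle to be precisely this last bookkeeping: arguing that decision-tree paths and the wanderings of the two treewalkers in {\sc ExamineBranch} never force exploration of a super-polynomial region, and that the recorded-state space of {\sc ExamineBranch} is genuinely poly-bounded so the repetition test fires quickly. A secondary subtlety is a rigorous proof of the third progress invariant --- that an important tile flagged by {\sc ExamineBranch} always produces an actual state split rather than being re-added without progress --- which requires spelling out how the mislabeled descendant is converted into a distinguishing decision-tree query.
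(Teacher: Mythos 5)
First, a point of reference: the paper itself contains no proof of Theorem~\ref{divine} --- the appendix of omitted proofs covers only Theorem~\ref{distperiodic} --- so your proposal cannot be checked against an official argument and has to stand on its own. Its skeleton is the natural Angluin-style one and is, in outline, the right one: maintain the invariant that the classification $E$ is always coarser than the subtree-shape (Myhill--Nerode) partition, conclude that each of the three progress events forces a genuine split so that there are $O(N)$ outer iterations, and then bound the work per iteration. That part of the plan is sound (modulo one bookkeeping wrinkle you gloss over: a tile added to $S$ by \textsc{ExamineBranch} may share its code with an existing representative, so ``distinct representatives receive distinct codes'' is not literally an invariant and the counting has to be amortized against the later split).

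Two steps are genuine gaps rather than bookkeeping. (i) The central invariant --- that $E$ never separates two tiles with the same subtree shape --- holds only if the $8$-way edge classification (parent/child and the six non-tree cases, which involve $\delta(g)-\delta(g')$ for a neighbour $g'$ possibly lying outside the subtree of $g$, and the output of \textsc{GetSide}) is itself a function of the subtree shape. That is precisely the unproved content of the paper's assertion that the minimal GRTS has one state per equivalence class; the theorem of Section~\ref{sec:parent} that you invoke asserts only that the state set is finite, not that its cardinality equals the number of subtree shapes. Without this, $N$ must be replaced by the number of classes of the refined relation, and the parameter in the theorem statement is the wrong one; this needs to be proved or explicitly assumed. (ii) The polynomial bound on the explored region. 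The amortization of \textsc{GetSide} bounds cost per unit of cached information, not the total information generated: nothing in the proposal bounds how far the two walkers of \textsc{GetSide} or \textsc{ExamineBranch} travel before their termination conditions fire, beyond the assertion that the recorded-datum space is polynomial. For \textsc{ExamineBranch} that assertion is itself delicate, because the recorded datum as described omits the push/pop stacks, so a repeated datum implies periodicity only if data are recorded at stack-empty moments or the stack contents are polynomially bounded and included in the datum. You honestly flag both issues as obstacles, but as written the proposal is a credible plan rather than a proof, and item (i) in particular affects which quantity the running time is actually polynomial in.
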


We conjecture that, for a fixed bound on the degrees of tiles and vertices, the number of states and the size
of dead branches are polynomial.

\subsection{Generating approximate tessellations without a GRTS}\label{sec:nogrts}
The problem with Theorem \ref{divine} is that satisfying its assumptions about $G$ and $\delta$
is not straightforward. Here we describe our method of dealing with the issue. 
Rather than using $G=C$, our algorithm will lazily generate (better and better) approximations of the tessellation $C$ and $\delta$ for its own use.

For every already generated tile $g \in G$ we know its
type $t(g) \in T$, and connections $e(g,i)$, which again can be either NULL or
$(g',i') \in G \times \bbZ$. We start with $G= \{g_0\}$, where $g_0$ is an
unconnected tile of arbitrary type.

Whenever a treewalker $(g,i)$ steps forward and $e(g,i)$ is $NULL$, we consult
the connection rules to learn that $(t(g),i)$ connects to $(t', i')$, create
a new tile $g'$ of type $t(g')=t'$, and connect $(g,i)$ to $(g',i')$. We also
perform the \emph{valence check}: how many of tiles around the $i$-th vertex of $g$
we already know. If we know $v(t(g),i)$ of them, we also connect the first and
last one of them in the obvious way. Symetrically, we also similarly check
the $(i+1)$-th vertex of $g$.

It may happen that an inconsistency (\emph{unification error}) is detected, i.e., we know more than $v(t(g),i)$
of them. This is because the generation rule above can effect in generating two tiles
$g_1$ and $g_2$ which correspond to the same tile $c \in C$. While the connection in the valence
check prevents this in most cases, double generation is still possible if our
sequence of generations has managed to go around a tile $c \in C$ which has not been
generated yet (and in that case, the valence check detects non-uniqueness
after $c$ is generated). Avoiding such inconsistency is the the main issue that 
we need the regular tree structure to avoid.

We solve the inconsistency by using the union-find data structure, that is, we
learn that the first and $v(t(g),i)$ in the cycle are actually the same cell, and
unify them; such an unification may also recursively cause unification of other cells.
Our union-find data structure is a bit more complex than the usual one -- we don't
only learn that $g_1$ and $g_2$ of type $t$ are the same cell, but also
the orientation matters. Thus, the unification method {\sc Unify} is called for a pair of walkers
$(g_1,i_1)$ and $(g_2,i_2)$ (it may happen that $i_1=i_2+rn_t$ for some
$r \neq 0$). Every cell $g \in G$ remembers the representative $u(g) \in G' \times \bbZ$,
which means that the treewalker $(g,0)$ has been unified with $u(g)$.
The unification method {\sc Unify} works as follows:
\begin{itemize}
\item make sure that $g_1$ and $g_2$ are the current representatives, i.e.,
$u(g_1)=(g_1,i_1)$ and $u(g_2)=(g_2,i_2)$ (if not, find their representatives)
\item rotate the second treewalker so that $i_2=0$, and rotate the first treewalker accordingly
\item set $u(g_2)$ to $(g_1,i_1)$
\item move all the connections of $g_2$ to $g_1$, recursively unifying them in the case if we already knew a (different) neighbor 
for both $g_1$ and $g_2$
\end{itemize}

\subsection{Calculating distances from $g_0$}\label{sec:shortcut}

Our algorithm also needs to know $\delta(g)$ for every generated tile, where $\delta(g)$
is the length of the shortest path from $g_0$ to $g$. This is again a major challenge, because
what we need is not the length of the shortest path in $G$, but the length of the
shortest path in $C$ (i.e., taking not generated tiles into account). Again, solving this
challenge is one of the main issues that we generate a geodesic regular tree
structure to avoid. Standard graph algorithms such as Breadth First Search are
impractical because of the exponential expansion of $\bbH^2$.

We simply assign the distance of $\infty$ to every tile we generate, and whenever
we find out that two tiles $g_1$ and $g_2$ are adjacent and the currently known
value of $\delta(g_2)$ is greater than the currently value of $\delta(g_1)+1$,
we set $\delta(g_2)=\delta(g_1)+1$ and recursively check all the known neighbors of
$g_2$. In case of unification, we set $\delta$ for the new cell to 
$\min(\delta(g_1), \delta(g_2))$.

If the rest of the algorithm needs to know $\delta(g)$, we mark $g$ as \emph{solid}.
In case if $\delta$ changes for a solid tile, we have a \emph{distance error}.
We note this fact -- we know that the
results obtained so far are not reliable, and we will need to redo a part of our
computations.

Additionally, we try to prevent the same happening in the future by recording
\emph{shortcuts}. We use the following method. When we set $\delta(g_2)$ to
$\delta(g_1)+1$, we also save the direction from $g_2$ to $g_1$ in memory.
By checking the saved directions recursively, we can thus generate the whole
path of length $\delta(g_2)$ from $g_2$ to $g_0$. In the case of a distance
error, we get two paths: the old one $\pi_1$ and the new, shorter one $\pi_2$.
Let $g_3$ be the first intersection of $\pi_1$ and $\pi_2$, and $p_1$ and
$p_2$ be the sequence on turns of paths $\pi_1$ and $\pi_2$ until $g_3$.
Let $p$ be the loop obtained by concatenating $p_1$ and the reverse of $p_2$.
We record the loop $p$ for the type $t(g)$. 

Later, whenever the rest of the algorithm asks for $\delta(g')$, we loop over
all recorded shortcuts $p$ for $t(g')$, and call the procedure {\sc TryShortcut}$(g',p)$.
This procedure tries to replicate the sequence of treewalker movements given by $p$,
but starting from $g'$. In the case when during this replication
we get to a move which leads to a yet ungenerated connection $e(g'',i)$ = NULL,
we check whether continuing $p$ would yield a shortest path to $g'$, by 
comparing $\delta(g')$ with $\delta(g'')+n$, where $n$ is the number of remaining
steps in the loop $p$. If no, we exit the procedure; if yes, we continue, eventually
reaching $g'''$. In case if $g''' \neq g'$ we know that $g'$ and $g'''$ should be
unified, and thus, we have obtained a shorter path to $g'$ than the one previously known.

When a new shortcut is added, we also call {\sc TryShortcut}$(g',p)$ for all the solid
cells of type $t(g)$.

We compute the parent of every tile according to the rules from Section \ref{sec:parent}.
In most cases, applying the first rule will be sufficient to yield a winner. Since this
rule is local, the winner can be determined very quickly. We cache the parent direction
for every tile $g \in G$; in case if $\delta$ is updated for some $g$, we clear the cache for
$g$ and its neighbors. The same shortcut method is also used when we find out that 
the information about the parent of $g$ we have been using was incorrect.

\begin{theorem}\label{nondivine}
If $G$ and $\delta$ approximate $C$ as above, and the algorithm finishes successfully,
the GRTS obtained for $C$ is correct.
\end{theorem}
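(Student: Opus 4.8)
The plan is to show that a \emph{successful} run of the algorithm is indistinguishable, from the algorithm's own point of view, from a run on the true tessellation, so that Theorem \ref{divine} applies verbatim. Write $R$ for the GRTS the algorithm returns, $(G,\delta)$ for the final approximate tessellation and distance function, and $C$ (with graph metric $\delta_C$) for the tessellation determined by the input ATD $D$. I would establish a \emph{fidelity lemma}: after a successful run, (i) $G$ admits a morphism into $C$ preserving tile types, edge labels and the treewalker structure, which is injective on every region explored by \textsc{ExamineBranch}; and (ii) $\delta(g)=\delta_C(g)$ for every tile $g$ ever marked \emph{solid}. Granting the lemma, every structure query, every \textsc{GetSide} query and every distance query the algorithm performed returns the same value it would return against $(C,\delta_C)$; hence the control flow and output of the run coincide with those of the idealized procedure of Section \ref{sec:dettrans} on the true input, and Theorem \ref{divine} gives that $R$ is a correct GRTS for $C$.

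For part (i) I would induct on the order in which tiles and connections are created. A new tile is always created according to the connection function $c$ of $D$, which is the true local rule, and the valence checks together with \textsc{Unify} force every vertex figure of $G$ to close up after exactly $v(t,i)$ tiles; since $C$ is simply connected (a tessellation of the plane), these forced local identifications assemble into a global morphism $G\to C$. Injectivity on an \textsc{ExamineBranch}-explored region follows because a successful run leaves no outstanding unification error: two generated tiles mapping to the same tile of $C$ would, once the loop between them in $C$ is materialized — and \textsc{ExamineBranch} does generate every neighbor it touches — trigger a valence check and a \textsc{Unify}, contradicting termination. Consistency of $D$ then also yields that $R$, built from a closed and consistent $(S,E)$ and having passed \textsc{ExamineBranch} on every wall, generates a well-defined tessellation $C_R$ realizing $D$; since a consistent ATD determines its tessellation uniquely (Section \ref{sec:periodic}), $C_R\cong C$.

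For part (ii), $\delta\ge\delta_C$ is immediate: $\delta$ only decreases, and each assignment $\delta(g_2)\leftarrow\delta(g_1)+1$ records an actual path in $G$, hence by (i) a path in $C$. The reverse inequality on solid tiles is the core. Suppose $\delta(g)>\delta_C(g)$ for a solid $g$, and take a geodesic $\gamma$ from $g_0$ to $g$ in $C\cong C_R$. Descending $\gamma$ and comparing tree-depths, there is a first edge of $\gamma$ that $R$ labels as a side edge or a child edge rather than a parent edge; that edge lies on some wall $W$, and the endpoint of $\gamma$ on the low side of $W$ carries a value of $\delta_C$ strictly smaller than $R$'s predicted depth. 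Since every wall is examined by the state at its bottom, when \textsc{ExamineBranch} walked around $W$ one of its two treewalkers would have reached a tile whose code differs from the transition-rule prediction (adding a tile to $L$, forcing another iteration) or would have dropped $\delta$ on a solid tile (a distance error) — either way contradicting successful termination. The stopping criterion of \textsc{ExamineBranch} — seeing the same sequence of live/dead states twice, after which the walk provably repeats — is what lets this finite check certify the whole infinite wall and thereby exclude shortcuts passing through arbitrarily many never-generated tiles.

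The step I expect to be the main obstacle is precisely this last one: converting the pumping-based stopping condition of \textsc{ExamineBranch} into a proof that no shorter path exists anywhere in $C$, including in parts of $C$ the algorithm never built. One must make precise that a repeated state-sequence really forces periodicity of the wall and of the two subtrees it separates (using completeness of the live/dead classification and closedness of $R$), and that a geodesic shortcut necessarily crosses some examined wall on its low side in a way one of the two treewalkers observes. The remaining ingredients — the inductive fidelity of $G$, the uniqueness reduction $C_R\cong C$, and the final appeal to Theorem \ref{divine} — are routine once (i) and (ii) are in hand.
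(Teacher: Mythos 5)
First, a caveat: the paper itself gives no proof of Theorem \ref{nondivine} (the appendix proves only Theorem \ref{distperiodic}), so there is no official argument to compare yours against; I can only assess your proposal on its own terms.

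Your overall strategy --- prove a ``fidelity lemma'' saying the final $(G,\delta)$ faithfully reflects $(C,\delta_C)$ on everything the algorithm looked at, then reduce to Theorem \ref{divine} --- is, I think, aimed at the wrong target, and the places where it strains are exactly where it departs from what the theorem actually claims. The theorem is a \emph{soundness-of-the-certificate} statement: the approximation $(G,\delta)$ only influences \emph{which} candidate RTS gets proposed, and the content of the theorem is that the final validation (closedness, consistency, and \textsc{ExamineBranch} on every wall) certifies a passing candidate regardless of how wrong the scaffolding was along the way. The natural proof therefore runs through the candidate itself: a closed, consistent RTS whose walls all validate generates, via rules (1)--(4) of Section \ref{sec:trees}, a simply connected planar complex in which every edge obeys the connection rules and every vertex has valence $v(t,i)$; by the uniqueness of the realization of a consistent ATD this complex is $C$; and geodesicity follows because the certified edge labels give every non-tree edge a depth difference in $\{-1,0,1\}$, so any path from $g_0$ changes tree-depth by at most one per step and tree-depth equals graph distance. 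None of this needs your part (i) or (ii). By contrast, your claim that ``the control flow and output of the run coincide with those of the idealized procedure'' is both stronger than needed and almost certainly false: the whole design of Sections \ref{sec:nogrts}--\ref{sec:shortcut} is that the run \emph{does} make mistakes (unification and distance errors) and recovers from them, so the trace cannot match a run on the true input, and your hedged lemma (injectivity only on explored regions, $\delta=\delta_C$ only on solid tiles) does not deliver the trace-equality you then invoke to apply Theorem \ref{divine} ``verbatim.''

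Two concrete gaps inside the lemma itself. For (i), injectivity: a \textsc{Unify} call does not ``contradict termination'' --- unifications are routine during successful runs; only unifications that invalidate already-used data (a solid tile's distance, a cached state) force a restart. Worse, two duplicates $g_1,g_2$ mapping to the same tile of $C$ are only ever detected when the loop joining them is materialized, and nothing guarantees \textsc{ExamineBranch} materializes that loop: its walkers stay on the two sides of one wall and need not fill in the region a rogue loop passes through. For (ii), you correctly flag the crux --- turning the pumping-based stopping rule of \textsc{ExamineBranch} into a statement about all of $C$, including never-generated tiles --- but you leave it as an acknowledged obstacle rather than an argument, and it is precisely the step that carries the whole theorem (it is also needed in the certificate-based proof sketched above, where it shows that the finite repeated state-sequence forces the wall and its two flanking subtrees to be eventually periodic, so the finite check covers the infinite wall). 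As it stands the proposal is a plan with the hardest step missing and a reduction that does not go through as stated.
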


\subsection{Other optimizations and special cases}\label{opts}

The running time of our algorithm depends of its implementation details. In this section we list
the most important details for our implementation.

\begin{itemize}
\item Rather than keeping the set $S$, we keep the list of {\it important} tiles $S'$; the set $S$
is constructed in every iteration, starting from $S'$, and adding its closure on the fly.
\item We cache $E(g)$ for every tile $g$ for which it has been computed. For every $q\in Q$ we also cache
the list of tiles $g$ such that $E(g)=q$, so that in the case of inconsistency at $q$, we can easily remove
all states $q$ from the cache.
\item After finding the closed set $S$, in every iteration we examine all the inconsistencies and
extend the decision trees according to all of them. Likewise, we 
call {\sc ExamineBranch} for all tuples $(q,i_1,i_2)$ -- however, if the same incorrect transition rule is
found multiple times, we add just the first witness to the list of important tiles.
\item In the case of a distance error, we clear the {\sc GetSide} cache, and restart the iteration.
\item Unification and distance errors may mislead our algorithm, extending branches 
and decision trees further and further. To deal with this, we
periodically refresh all the sets of important tiles and decision trees back to their initial state.
We do this after every $2^n$ iterations ($n \in \bbN$).
\item Similarly, unification and distance errors may mislead {\sc GetSide} and {\sc ExamineBranch}
procedures. If the number of iterations in these procedures exceeds the variable $s$, we restart the
iteration.
\item The algorithm as described above starts the tree from only one root $g_0$. For practical reasons 
(Section \ref{sec:app}) it is
useful to know the tree structures starting from root of every possible tile type $t \in T$. 
Thus, our initial
$G$ and the initial $L$ consists of root tiles of all types, these root types generate separate tessellations
each with its own $\delta$. The obtained tree structure will contain many roots (states without parents)
and descendant states will likely be shared between the various roots.
\end{itemize}

While we have no proof that our methods of dealing with unification and distance issues always
yield a solution (in time polynomial in $|T|$ for bounded degrees of tiles and vertices), our extensive
testing lets us conjecture that it is the case. 

\section{Applications}\label{sec:app}

In this section we briefly list the applications of geodesic tree structures.

{\bf Computing distances from the origin, or an arbitrary fixed tile.}
In a geodesic tree structure, we immediately know the distance from every tile $g$ to the
origin tile $g_0$. We can also compute the origin from multiple fixed tiles $g_0'$ -- simply by
creating another tree structure $T'$ rooted in $g'_0 \neq g_0$, and remembering the mapping between $T'$ and
$T$.

{\bf Computing relative distances from an ideal point.}
In section \ref{sec:trees} we have said that the situation $x=P$ may not happen,
because we start $g_0$ in a state which has no parent.
However, there is also an alternative structure where we start in $g_0 \in G$ which does
have a parent. In this case, we generate an unrooted tree
that is descending infinitely. If we need to extend the tree at $(g,i)$ and find
$x = P$, we connect $(g,i)$ to
a new tile $(g_1,i')$ of state $q(g_1)$
such that $e(q(g_1),i') = q(g)$. We pick $g_1$ randomly from all the possible states
that can be indefinitely extended downwards, and having live branches on both left and right.
Effectively, the tree structure is rooted at 
The set of tiles is the discrete analog of a horocycle.
This generalizes the algorithm used in HyperRogue \cite{hyperrogue} to generate discrete horocycles.

{\bf Computing shortest paths between two tiles.}
It is also important to determine the shortest paths between two arbitrary tiles $g_1$ and $g_2$.

In a Gromov hyperbolic tiling, we know that every shortest path from $g_1$ to $g_2$ lies inside
the $\delta$-neighborhood of the shortest path from $g_1$ to $g_2$ going through $g_0$, where
$\delta$ is a constant depending on the tessellation. This gives us an algorithm to determine
$\delta(g_1,g_2)$ in time $O(\delta(g_1,g_2))$ \cite{trigrid} -- simply generate the shortest path from
$g_1$ to $g_0$ and the shortest path from $g_2$ to $g_0$, and generate $\delta$-neighborhoods of
every tile on the way. For a fixed tessellation, this gives an algorithm with time complexity
$O(\delta(g_1, g_2))$. Finding $\delta$ efficiently is the subject of further work.

\begin{theorem}\label{distperiodic}
In a fixed periodic tiling of Euclidean space $\bbE^n$, we can compute the distance $d$ between two tiles
(given their coordinates) in time polylogarithmic in $O(\log d)$.
\end{theorem}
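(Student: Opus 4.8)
The plan is to exploit that, combinatorially, a periodic tiling of $\bbE^n$ is a Cayley‑like graph of a crystallographic group. By Bieberbach's theorem its symmetry group has a finite‑index translation subgroup $\Lambda\cong\bbZ^n$ acting cocompactly on the tiles, so a tile is coded by a pair $(\mu,\tau)$ with $\mu\in\bbZ^n$ and $\tau$ ranging over the finitely many $\Lambda$‑orbit representatives — this is exactly what ``coordinates'' means here. The tile‑adjacency metric $d$ is $\Lambda$‑invariant, so $d((\mathbf 0,\tau_1),(\mu,\tau_2))$ depends only on $\mu,\tau_1,\tau_2$; write it $f_{\tau_1,\tau_2}(\mu)$. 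Since each edge step moves the $\Lambda$‑coordinate by one of finitely many vectors and changes the type, walks in the tiling correspond to walks in a fixed finite digraph $H$ on the type set whose edges carry labels in $\bbZ^n$, and $f_{\tau_1,\tau_2}(\mu)$ is the minimum length of a walk in $H$ from $\tau_1$ to $\tau_2$ whose label‑sum equals $\mu$.

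The structural heart of the argument is the claim that the graph $\{(\mu,k):f_{\tau_1,\tau_2}(\mu)\le k\}\subseteq\bbZ^{n+1}$ is a \emph{semilinear} set (equivalently, Presburger‑definable). To see this, note that the set of edge‑multiplicity vectors of walks from $\tau_1$ to $\tau_2$ in $H$ is semilinear: by Eulerian flow decomposition it is the finite union, over the choices of which edges are used at least once (subject to the support connecting $\tau_1$ to $\tau_2$), of the integer points of a rational polyhedron cut out by flow conservation with a unit $\tau_1\!\to\!\tau_2$ excess. Carrying the edge‑count sum along as an extra coordinate and pushing forward under the linear map ``total label‑sum'' (semilinear sets are closed under linear images and under adding a slack for ``$\le k$'') shows $\{(\mu,k):\exists\text{ such walk of length}\le k\}$ is semilinear, and intersecting with the complement of its shift by $(\mathbf 0,1),(\mathbf 0,2),\dots$ (Boolean closure) extracts the graph of $f_{\tau_1,\tau_2}$ itself, which is total since the tiling graph is connected. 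Unpacking ``semilinear graph'': there is a complete polyhedral fan $\sigma_1,\dots,\sigma_m$ of $\bbR^n$, a full‑rank sublattice $\Lambda'\le\bbZ^n$, a radius $R_0$, and for each pair $(j,\text{coset }\mu_0+\Lambda')$ a linear functional $\ell_{j}$ and a constant $b_{j,\mu_0}$ such that $f_{\tau_1,\tau_2}(\mu)=\ell_{j}(\mu)+b_{j,\mu_0}$ for every $\mu\in\sigma_j$ with $\|\mu\|\ge R_0$ lying in $\mu_0+\Lambda'$. All of this data depends only on the fixed tiling and is finite, so we regard it as precomputed.

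The query algorithm is then immediate. On input $(\mu,\tau_1,\tau_2)$, let $L=\Theta(\log\|\mu\|)$ be the bit‑length of the coordinates: if $\|\mu\|<R_0$ return $f_{\tau_1,\tau_2}(\mu)$ from a finite lookup table; otherwise determine the cone $\sigma_j\ni\mu$ by evaluating the finitely many fixed defining inequalities (each a dot product of a constant vector with $\mu$, costing $\mathrm{poly}(L)$ bit operations), compute the coset $\mu_0$ of $\mu$ modulo the fixed sublattice $\Lambda'$ by fixed‑dimension integer linear algebra ($\mathrm{poly}(L)$), and output $\ell_{j}(\mu)+b_{j,\mu_0}$. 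The cost is $\mathrm{poly}(L)=\mathrm{poly}(\log\|\mu\|)$. Finally, each step changes the $\Lambda$‑coordinate by a bounded vector, so $\|\mu\|\le C\,d$, and conversely $d\le C\|\mu\|$ by cocompactness; hence $\log\|\mu\|=\Theta(\log d)$ and the running time is $\mathrm{poly}(\log d)$, as claimed.

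I expect the main obstacle to be making the semilinearity step fully rigorous — in particular handling the exact‑length constraint together with the connectivity‑of‑support condition so that one genuinely stays inside the semilinear world, and then checking that the resulting description of $f$ is query‑able in polylogarithmic time, i.e.\ that the fan, the sublattice $\Lambda'$ and the functionals $\ell_j$ have descriptions whose size depends only on the tiling. A secondary subtlety is the quasi‑isometry lower bound $\|\mu\|=O(d)$, which needs cocompactness of the $\Lambda$‑action and is where the hypothesis that we tile all of $\bbE^n$ is used.
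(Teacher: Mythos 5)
Your proposal is correct and follows essentially the same route as the paper: Bieberbach gives a finite-index translation lattice, the distance relation becomes the Parikh image of walks in a finite labeled graph and hence semilinear, and a semilinear description is queried in time polylogarithmic in the coordinates (the paper cites \cite{parikhlics} for the membership step where you unfold an explicit fan/lattice structure, and it likewise reduces ``find the smallest $d$'' to semilinear membership). The only stylistic difference is that you re-derive Parikh's theorem via Eulerian flow decomposition and make the query algorithm explicit, whereas the paper delegates both to citations.
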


The proof based on the computational properties of Parikh images \cite{parikhlics} can be found in the Appendix.

{\bf Representing arbitrary points in $\bbH^d$.} Geodesic tree structures can be used to
represent points in $\bbH^d$ while avoiding precision errors: each point $x_1$ is represented by a tile $g_1$ it is in,
and its coordinates relative to the tile $g_1$ \cite{rtviz}. To transform $g_2$-relative coordinates to $g_1$-relative
coordinates, we need to determine the shortest path from $g_2$ and $g_1$, and compose the isometries along this path.
This is straightforward in a periodic tessellation.

{\bf Coordination sequences.}
The coordination sequence of a tessellation with the starting tile $g$ is the sequence $a_0, \ldots$ such
that $a_n$ is the number of tiles in distance $n$ from $g$. Coordination sequences are studied in the foundations of crystallography,
see \cite{colorbook} for an extensive literature. A regular tree structure rooted at the tile $g$ lets us easily 
find the system of linear recursive formulae for the coordination sequence: for a tree state $q\in Q$, let $a_q(n)$ be the
number of tiles in distance $n$ in state $q$. For $n>0$ we have $a_q(n) = \sum_{p \in Q} a_{p}(n-1) d(p,q)$, where
$d(p,q)$ is the number of children of $p$ in state $q$.

\section{Experimental results}\label{sec:experimental}
We have tested our algorithm on 149629 tessellations, including all the Euclidean (102251) and hyperbolic (47378) tessellations 
from \cite{tescatalog}%
\footnote{Our software, tessellation data and experimental results can be found at \url{https://figshare.com/articles/software/Generating_Tree_Structures_for_Hyperbolic_Tessellations_code_and_data_v2/19165922}.}.
We analyzed average running time (in seconds on Intel(R) Core(TM) i7-9700K CPU @ 3.60GHz), memory consumed by the algorithm
(counted as the number of tiles generated), and the number of tree states. Figure~\ref{correct_time_geo} depicts the results in division by geometry. In Appendix, Figure~\ref{correct_shps} depicts division by the number of shapes. 
The are no hyperbolic tessellations that took longer than 10 seconds, and in the case of Euclidean geometry, only 44 tessellations took longer than 30 seconds.

\begin{figure}[ht]
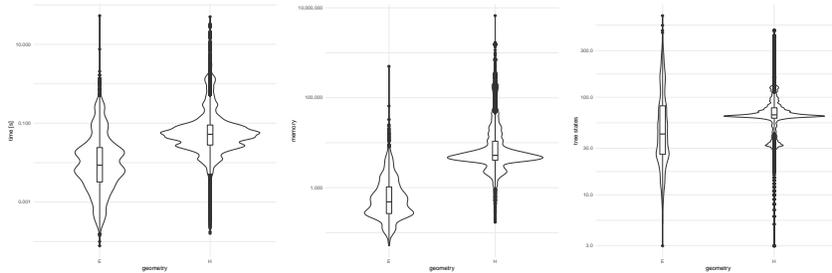

\begin{center}
  \includegraphics[width=0.3\linewidth]{graphs/time_geometry.pdf}
  \includegraphics[width=0.3\linewidth]{graphs/cells_geometry.pdf}
  \includegraphics[width=0.3\linewidth]{graphs/premini_geometry.pdf}
\end{center}
\caption{ Average running time [s], memory consumed,
and the number of tree states in division by geometry.
\label{correct_time_geo}}
\end{figure}

No matter the characteristic analyzed, Euclidean tessellations need more resources on average, however this result significantly stems from the difficulty of the tessellations we analyzed. Interistingly, the distributions for hyperbolic tessellations are steeper. Whereas one expects that the number of shapes correlates positively with
the amount of resources, we notice that the usage of the higher number of shapes
comes with decreasing variance for hyperbolic geometry (Fig.~\ref{correct_shps} in Appendix: note the triangular pattern -- the range for tessellations based on low number of shapes is much greater than the range for tessellations with the highest number of shapes). When it comes to Euclidean tessellations, we notice that no matter the characteristic, the range of values for small number of shapes (1-8) is significantly lower than for large number of shapes. Moreover, within those two groups, the ranges are similar -- we do not notice the triangular shape known from the results for hyperbolic geometries.
Appendix contains Figures \ref{worstcase:h} and \ref{worstcase:e} that show the tessellations which achieve the highest time, memory, and number of tree states.

Our dataset includes all $k$-uniform tilings of the Euclidean plane for $k\leq 12$. These are tilings by regular polygons with $k$ orbits of vertices (under isometries including orientation-changing isometries).
There is a specific family which achieves the largest number of states for every $3\leq k\leq 12$, consisting of a row of squares followed by $\Theta(k)$ rows of triangles. This number of states grows
quadratically with $k$. Changing the square into a hexagon (by changing its $s_t$ from 2 to 3) yields a family of similarly difficult hyperbolic tessellations (one state less). Thus, we conjecture that
the number of states is quadratic with the number of tile types when tile and vertex degrees are bounded by a fixed constant.

We have also compared the performance of our implementation to other approaches. Below we present a short summary of our results; see Appendix for more details.
BFS, numerical unification, and Knuth-Bendix procedure all achieve significantly worse results.
Running our algorithm on $G=C$ and correct values of $\delta$ yields better results, but not much so.
We conclude that our algorithm is successful at dealing with unification and distance errors.
The shortcut recording optimization described in Section \ref{sec:shortcut} is crucial to this success.




\section*{Acknowledgments}
This work has been supported by the National Science Centre, Poland, grant UMO-2019//35/B/ST6/04456.

\newpage

\appendix

\section{Omitted proofs}
\begin{proof}[Proof of Theorem \ref{distperiodic}]
In a fixed periodic tiling of the Euclidean plane, the group of isometries of the tessellation $G$ is one of 
the 17 wallpaper groups, and has a subgroup $G_1$ of translations of finite index. 
By considering two tiles $g$ and $g'$ to be of the same type only if there exists an isometry in $G_1$
which takes $g$ to $g'$, and thus, we can assume without loss of generality that,
for every tile type $t \in T$, the set of all tiles of type $t$ forms a lattice.
For two tile types, $t_1$, $t_2$, The set of $(x,y,d) \in \bbZ^3$ such that the number of steps between the copy of $t_1$ 
at lattice coordinates $(0,0)$ and the copy of $t_2$ at lattice coordinates $(x,y)$ is at least $d$ 
is a Parikh image of a finite automaton, and thus by Parikh theorem \cite{parikh} a semilinear set. 
 Membership in such a semilinear set can be recognized in time polylogarithmic in $x+y+d$
\cite{parikhlics}, and by easy reduction we can also find the smallest $d$ for given $(x,y)$ in polylogarithmic time.
The same argument holds in higher dimensions (the existence of $G_1$ follows from the Bieberbach theorem \cite{Bieberbach1911,Bieberbach1912}).
\end{proof}

\section{Further experimental results}
\subsection{Na\"ive approaches}
We will compare alternative approaches on a reduced dataset, excluding $k$-uniform Euclidean tilings for $k \geq 10$. 
As a result, the final sample contains 16132 Euclidean tessellations and 47378 hyperbolic ones. The distributions of shapes per tessellation resemble distributions in the full sample (Fig.~\ref{shapes_geo}).
We allow the alternative approaches to generate 80,000,000 tiles, run for 600 seconds,
and run 9999 iterations of the main loop. These limits are significantly greater than what our main algorithm achieves (outliers are 2700850 tiles, 32 seconds, 2346 iterations).


{\bf Single Origin.}
As explained in Subsection \ref{opts} we start from roots of every possible type $t \in T$. This increases
our test coverage and has practical advantages.
We can also start from only single root (chosen in an arbitrary way). The na\"ive approaches explained above are also based on single origin,
for the sake of simplicity and memory usage reduction.

Since the measurements of running time tend to be unstable, we compare the number of dominating operations. We take the
{\it move}, i.e., the operation of finding $i$-th neighbor of a tile $t$, as the dominating operation.
Usage of single origin rarely worsens the performance of the algorithm in terms of move count for Euclidean tessellations (0.18\% of cases); for hyperbolic tessellations, the share of worse results due to usage of single origin is noticable: 35.52\% of cases (Figure \ref{single_origin}).
Generally, it should need no more memory than the original proposition (63.91\% of the hyperbolic tessellations and 100\% of the Euclidean tessellations). Single origin also comes with lower number of tree states
(there were 8 cases where there were less states in the multiple origin version, probably due to unification/distance errors).

\begin{figure}[ht]
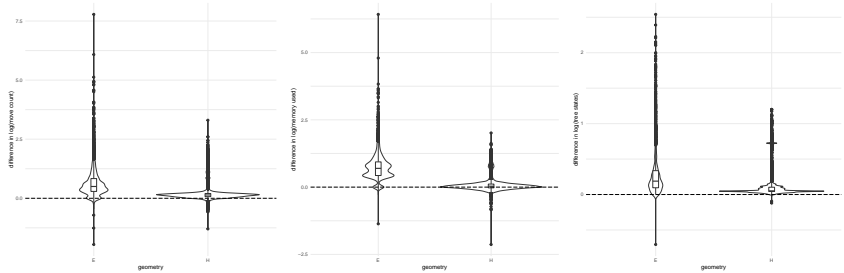

\begin{center}
  \includegraphics[width=0.3\linewidth]{graphs/single_origin_time_ratio.pdf}
  \includegraphics[width=0.3\linewidth]{graphs/single_origin_memory_ratio.pdf}
  \includegraphics[width=0.3\linewidth]{graphs/single_origin_premini_ratio.pdf}
\end{center}
\caption{ Difference in the number of moves, memory consumed,
and the number of tree states in division by geometry if single origin is used or not.
\label{single_origin}}
\end{figure}


{\bf Numerical implementation.}
One serious difficulty in our implementation is the unification error. A na\"ive attempt of avoiding this kind of error
is to compute the coordinates of every tile. If we get the same coordinates
as a previously existing tile, we know that we connect to it. See Section \ref{sec:previous} for an overview of this
approach. However, we use a modified version which prevents dealing with large values.
We use a tessellation for which the tree structure is known (the \{7,3\} tessellation from Figure \ref{examplefig}a), and we
compute the isometry coordinates relative to it \cite{rtviz}.

We compare two matrices $T$ and $U$ by checking the hyperbolic distance from $Th_0$ to $Uh_0$, and from $Tv$ to $Uv$,
where $h_0$ is the origin, and every vertex $v$ of the tessellation. If the distance is over $10^{-2}$, we assume the
points are different; if it is over $10^{-3}$, we stop computation and report a numerical precision issue. We encountered 586 cases of precision errors,
all of them occured in hyperbolic tessellations.

We have noticed that one of the reasons why our algorithm works better than the numerical approach is the
valence checks (see Section \ref{sec:nogrts}). The na\"ive numerical approach does not perform the valence checks,
which causes it to not be aware of some connections. Even controlling for this issue,
the algorithm still yields numerical errors, including 252 cases of precision
errors, again all of them for hyperbolic tessellations, either using low number of shapes (1 or 2) or a large number of shapes (over 31, 3.17\% of cases).

{\bf BFS.}
Another serious issue is distance errors. A na\"ive attempt of avoiding this kind of error is to use Breadth-First Search algorithm.

This is a very bad idea due to the exponential growth of hyperbolic tessellations.
In our case, we succeded in only 41.55\% of the cases, in 58.45\% of the cases we 
the algorithm stopped because of the overflow of our memory limit.
Note that even if in our simulation we did not encouter distance errors, they are still possible when using BFS because of the unification issues,
although such a situation did not occur in our experiments.

One could also prevent both kinds of errors by using both BFS and numerical implementation. This shares the disadvantages of both approaches.

{\bf Knuth-Bendix method.}
We have run the Knuth-Bendix completion algorithm as outlined in Section \ref{sec:previous} on our data.
Our implementation seems not to terminate for most tessellations. We run the algorithm until
10 seconds pass (14144 cases) or a rewriting rule that has at least 250 symbols is generated (44176 cases); the algorithm
was successful in only 5188 cases. Out of unsuccessful attempts, 75.51\% were hyperbolic tessellations.

\begin{figure}[ht]
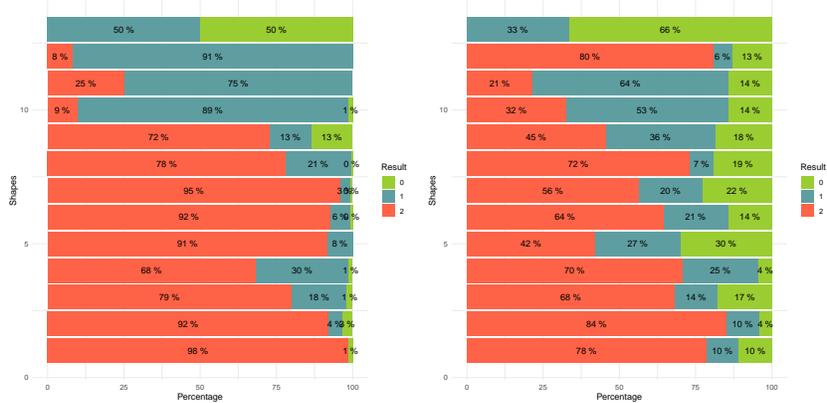

\begin{center}
  \includegraphics[width=.45\linewidth]{graphs/kbres_euclid_shps.pdf}
   \includegraphics[width=.45\linewidth]{graphs/kbres_hyper_shps.pdf}
\end{center}
\caption{Codes in division by geometry and the number of shapes.
\label{kbres_shps}}
\end{figure}

Figure \ref{kbres_shps} shows the percentage of successful cases depending on geometry and the number of shapes
(0 = success, 1 = timeout, 2 = too long rules). The Knuth-Bendix method tends to work better for hyperbolic tessellations (10.74\% of correct cases in comparison to 0.63\% of correct cases). 

When it comes to successful attempts, the algorithm took more than 5 seconds
in only 105 cases and generated temporary rules of length over 150 in only 295 cases, which suggests that the algorithm
does never terminate in most failed cases in our database.

\subsection{The effect of unification and distance errors}

To measure the effect of unification and distance errors, we run the algorithm twice: in the
first run we determine the correct tree structure, and in the second run, we use the correct tree obtained to generate the 
tessellation correctly, and see the performance of our algorithm given the correct data. There are two approaches: (divine) use the
correct tiles and distances (avoiding both unification and distance errors), (demigod) use the correct tiles but do not use the information
about distances (avoiding unification errors, but not distance errors). In both cases we perform the valence checks.

\begin{figure}[t!]
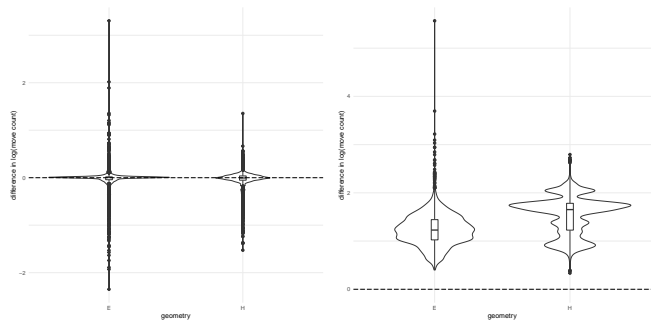

\begin{center}
  \includegraphics[width=0.35\linewidth]{graphs/demigod_time.pdf}
  \includegraphics[width=0.35\linewidth]{graphs/divine_time.pdf}
\end{center}
\caption{ Difference in the log(move count) if demigod (left) or divine (right) are used. Values greater than 0 mean that 
demigod or divine was better than our approximation.
\label{demigod_divine}}
\end{figure}

Figure \ref{demigod_divine} shows the effect of unification and distance errors by comparing the move count of our algorithm 
to the move count achieved by ``demigod'' and ``divine'' versions. In the case of demigod, the mode is close to zero, suggesting that  unification errors occur less frequent than we supposed. The divine approach saves moves in our implementation. We conclude that our algorithm is successful at dealing with unification and distance errors.

The shortcut recording optimization described in Section \ref{sec:shortcut} is crucial to this success.
If we do not generate shortcuts, serious distance errors make our algorithm run noticeably worse. The algorithm fails in 17.8\% of 
Euclidean tessellations and 1.8\% of hyperbolic tessellations.
Resets mentioned in Section \ref{opts} are also important. Without resets, the algorithm fails in 22 cases. That happens only in Euclidean tessellations.

\section{Figures}

This appendix contains the Figures which could not fit in the page limit.
Figures \ref{worstcase:h} and \ref{worstcase:e} show the tessellations which achieve the highest time, memory, and number of tree states.
In each geometry, the selection includes the tessellation which
achieves the worst case across all tessellations. In many cases, the next top places are achieved by tessellations of the same family, which tend to be very similar. Therefore, the remaining
tessellations in Figures \ref{worstcase:h} and \ref{worstcase:e} are those which also achieve high values while being very different.

\begin{figure}[ht]
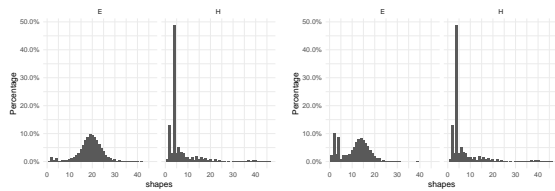

\begin{center}
  \includegraphics[width=0.3\linewidth]{graphs/shapes_geom.pdf}
  \includegraphics[width=0.3\linewidth]{graphs/shapes_geom_correct.pdf}
\end{center}
\caption{Distributions of number of shapes per tessellation for full data (left) and the experimental subset (right) in division by geometry.
\label{shapes_geo}}
\end{figure}

\begin{figure}[h!]
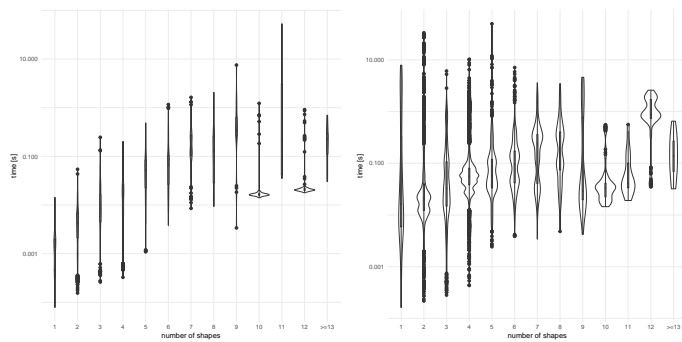
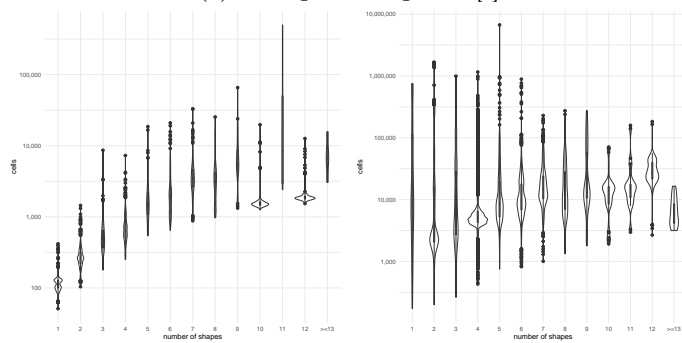
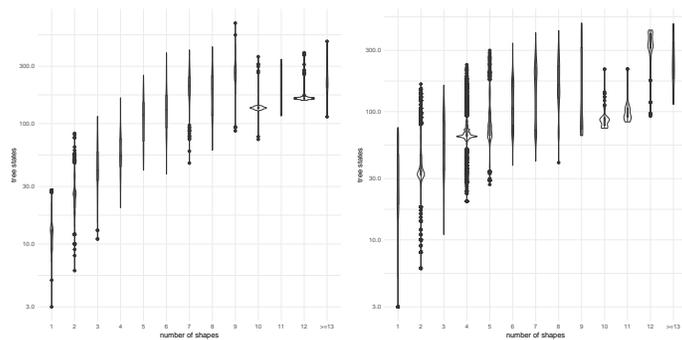

\begin{subfigure}{\textwidth}
\begin{center}
  \includegraphics[width=.37\linewidth]{graphs/time_shapes_euclid.pdf}
   \includegraphics[width=.37\linewidth]{graphs/time_shapes_hiper.pdf}
\end{center}
\caption{Average running time [s]}
\end{subfigure}
\begin{subfigure}{\textwidth}
\begin{center}
  \includegraphics[width=.37\linewidth]{graphs/cells_shapes_euclid.pdf}
   \includegraphics[width=.37\linewidth]{graphs/cells_shapes_hiper.pdf}
\end{center}
\caption{Memory use (number of cells)}
\end{subfigure}
\begin{subfigure}{\textwidth}
\begin{center}
  \includegraphics[width=.37\linewidth]{graphs/premini_shapes_euclid.pdf}
   \includegraphics[width=.37\linewidth]{graphs/premini_shapes_hiper.pdf}
\end{center}
\caption{Number of tree states}
\end{subfigure}
\caption{
The comparison of time, memory and tree states, in division by geometry 
(left Euclidean, right hyperbolic) and the number of shapes.
\label{correct_shps}}
\end{figure}

\begin{figure}[h]
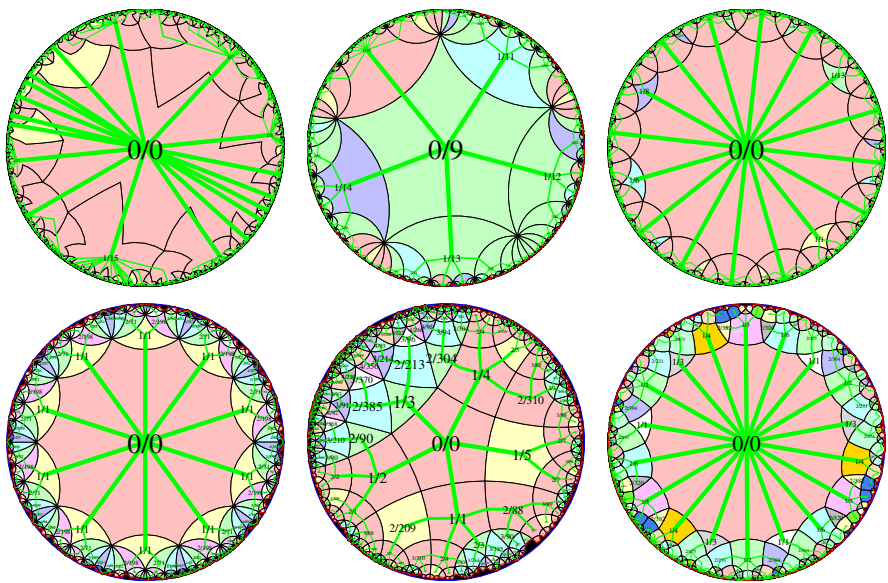

\begin{center}
  \includegraphics[width=.32\linewidth]{pic/h-max-time.pdf}
  \includegraphics[width=.32\linewidth]{pic/h-high-time.pdf}
  \includegraphics[width=.32\linewidth]{pic/h-high-mem2.pdf}
  \includegraphics[width=.32\linewidth]{pic/h-top-state.pdf}
  \includegraphics[width=.32\linewidth]{pic/h-top-state3.pdf}
  \includegraphics[width=.32\linewidth]{pic/h-high-state4.pdf}
\end{center}
\caption{Hyperbolic tessellations which achieve the highest time usage (in seconds), highest memory usage (k = thousands of cells), and the largest number of tree states. The red boundary shows which cells have been generated during the run of the algorithm.
Each tile gives distance from the center / tree state (if computed by the algorithm).
Top row: 7.1s, 2701k, 138 states; 2.8s, 1676k, 255 states; 1.1s, 902k, 129 states.
Bottom row: 0.6s, 143k, 625 states; 0.5s, 39k, 534 states; 0.3s, 193k, 501 states.}
\label{worstcase:h}
\end{figure}

\begin{figure}[h]
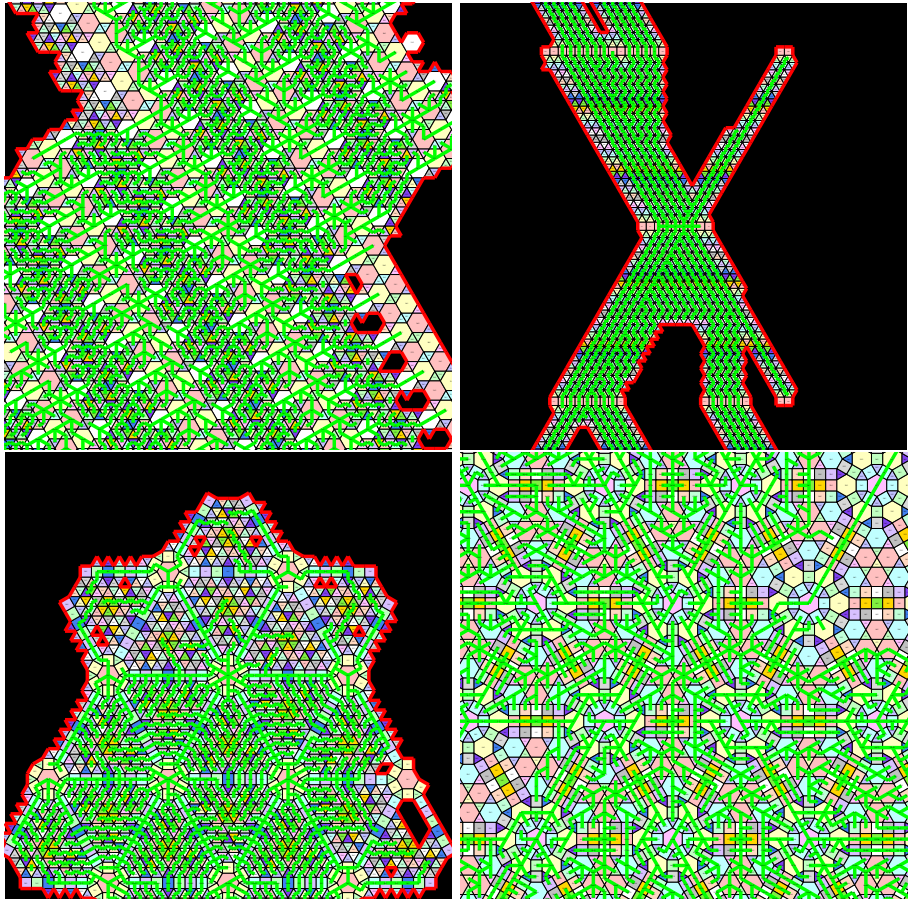

\begin{center}
  \includegraphics[width=.49\linewidth]{pic/e-max-time.pdf}
  \includegraphics[width=.49\linewidth]{pic/e-top-state.pdf}
  \includegraphics[width=.49\linewidth]{pic/e-high-state2.pdf}
  \includegraphics[width=.49\linewidth]{pic/e-top-mem.pdf}
\end{center}
\caption{Euclidean tessellations which achieve the highest time usage (in seconds), highest memory usage (k = thousands of cells), and the largest number of tree states. The red boundary shows which cells have been generated during the run of the algorithm.
Each tile gives distance from the center / tree state (if computed by the algorithm).
Top row: 257.7s, 1039k, 2538 states; 22.7s, 191k, 7297 states. 
Bottom row: 19.5s, 121k, 4440 states; 31.2s, 1187k, 1021 states.}
\label{worstcase:e}
\end{figure}

\end{document}